\newtheorem{statement}{Proposition}
\newtheorem{theorem}{Theorem}
\newtheorem{example}{Example}
\def\rank{\mathop{\rm rank}}
\def\mod{\mathop{\rm mod}}
\def\wgt{\mathop{\rm wgt}}
\def\Hn{\mathcal{H}_2^{\otimes n}}
\def\mat#1{\mathcal{#1}}
\newcommand{\ket}[1]{\left\vert{#1}\right\rangle}
\begin{document}
%
\title{Improved quantum hypergraph-product LDPC codes}

\date\today
\author{\IEEEauthorblockN{Alexey A.\ Kovalev\IEEEauthorrefmark{1} and
    Leonid P.\ Pryadko\IEEEauthorrefmark{2}}
  \IEEEauthorblockA{Department of Physics \& Astronomy, University of
    California, Riveside, California 92521}
  \IEEEauthorblockA{\IEEEauthorrefmark{1}%
    Email:{alexey.kovalev@ucr.edu}
    \qquad \IEEEauthorrefmark{2} 
Email:{leonid.pryadko@ucr.edu} }}

\maketitle
\begin{abstract}
  We suggest several techniques to improve the toric codes and the finite-rate
  generalized toric codes (quantum hypergraph-product codes) recently
  introduced by Tillich and Z\'emor.  For the usual toric codes, we introduce
  the rotated lattices specified by two integer-valued periodicity vectors.
  These codes include the checkerboard codes, and the family of minimal
  single-qubit-encoding toric codes with block length $n=t^2+(t+1)^2$ and
  distance $d=2t+1$, $t=1,2,\ldots$.  We also suggest several related
  algebraic constructions which increase the rate of the existing
  hypergraph-product codes by up to four times.
\end{abstract}

\maketitle
\IEEEpeerreviewmaketitle

\section{Introduction}

Quantum error
correction\cite{shor-error-correct,Knill-Laflamme-1997,Bennett-1996}
made quantum computing (QC) theoretically possible.  However, high
precision required for error correction
\cite{Knill-error-bound,
  Dennis-Kitaev-Landahl-Preskill-2002,%
  Steane-2003,%
  Fowler-QEC-2004,Fowler-2005,
  Raussendorf-Harrington-2007} combined
with the large number of auxiliary qubits necessary to implement it,
have so far inhibited any practical realization beyond
proof-of-the-principle
demonstrations\cite{chuang-2000,chuang-2001,Gulde-2003,%
  Chiaverini-2004,Friedenauer-2008,
  Kim-2010}.  

For stabilizer codes, the error syndrome is obtained by measuring the
generators of the stabilizer group.  The corresponding quantum measurements
can be greatly simplified (and also done in parallel) in low-density
parity-check (LDPC) codes which are specially designed to have stabilizer
generators of small weight.  Among LDPC codes, the toric (and related surface)
codes \cite{kitaev-anyons,Dennis-Kitaev-Landahl-Preskill-2002,%
  Raussendorf-Harrington-2007,Bombin-2007} have the stabilizer generators of
smallest weight, $w=4$, with the support on neighboring sites of a
two-dimensional lattice.  These codes have other nice properties which make
them suitable for quantum computations with relatively high error threshold.
Unfortunately, these code families have very low code rates that scale as
inverse square of the code distance.

Recently, Tillich and Z\'emor proposed a finite-rate generalization of toric codes\cite{Tillich2009}.  The construction relates a quantum code
to a direct product of hypergraphs corresponding to two classical
binary codes.  Generally, thus obtained LDPC codes have finite rates and the
distances that scale as a square root of the block length.
Unfortunately, despite finite asymptotic rates, for smaller
block length, the rates of the quantum codes which can be obtained from the
construction\cite{Tillich2009} are small.

In this work, we present a construction aimed to improve the rates of
both regular toric\cite{kitaev-anyons} and generalized toric
codes\cite{Tillich2009}.  For the toric codes, we introduce the
rotated tori specified by two integer-valued periodicity vectors.
Such codes include the checkerboard codes \cite{Bombin-2007}
($\pi/4$-rotation), and the family \cite{Kovalev-Dumer-Pryadko-2011}
of minimal single-qubit-encoding toric codes with block length
$n=t^2+(t+1)^2$ and distance $d=2t+1$, $t=1,2,\ldots$.  For the
generalized toric codes\cite{Tillich2009}, we suggest an algebraic
construction equivalent to the $\pi/4$ rotation of the regular toric
codes.  The resulting factor of up to four improvement of the code
rate makes such codes competitive even at relatively small block
sizes.

\section{Definitions.}
We consider binary quantum error correcting codes (QECCs) defined on the
complex Hilbert space $\Hn$ where $\mathcal{H}_{2}$ is the complex Hilbert
space of a single qubit $\alpha\left|0\right\rangle +\beta\left|1\right\rangle
$ with $\alpha,\beta\in\mathbb{C}$ and
$\left|\alpha\right|^{2}+\left|\beta\right|^{2}=1$. Any operator acting on
such an $n$-qubit state can be represented as a combination of Pauli
operators which form the Pauli group $\mathscr{P}_{n}$ of size $2^{2n+2}$ with
the phase multiplier $i^{m}$:
\begin{equation}
  \mathscr{P}_{n}=i^{m}\{I,X,Y,Z\}^{\otimes n},\; m=0,\ldots,3\:,
  \label{eq:PauliGroup}
\end{equation}
where $X$, $Y$, and $Z$ are the usual Pauli matrices and $I$ is the
identity matrix.  It is customary to map the Pauli operators, up to a
phase, to two binary strings, $\mathbf{v},\mathbf{u}\in\{0,1\}^{\otimes n}$
\cite{Calderbank-1997},
\begin{equation}
  U\equiv i^{m'}X^{\mathbf{v}}Z^{\mathbf{u}}\: \rightarrow(\mathbf{v},\mathbf{u}),
  \label{eq:mapping}
\end{equation}
where $X^{\mathbf{v}}=X_{1}^{v_{1}}X_{2}^{v_{2}}\ldots X_{n}^{v_{n}}$ and
$Z^{\mathbf{u}}=Z_{1}^{u_{1}}Z_{2}^{u_{2}}\ldots Z_{n}^{u_{n}}$.  A product of two
quantum operators corresponds to a sum ($\mod 2$) of the corresponding 
pairs $(\mathbf{v}_i,\mathbf{u}_i)$.

An $[[n,k,d]]$ stabilizer code $\mathcal{Q}$ is a $2^k$-dimensional
subspace of the Hilbert space $\Hn$ stabilized by an Abelian
stabilizer group $\mathscr{S}=\left\langle G_{1},\ldots
  ,G_{n-k}\right\rangle $, $-{\mathbb 1}\not\in\mathscr{S}$
\cite{gottesman-thesis}.  Explicitly,
\begin{equation}
  \label{eq:stabilizer-code}
  \mathcal{Q}=\{\ket\psi: S\ket\psi=\ket\psi,\forall S\in \mathscr{S}\}.
\end{equation}
Each generator $G_i\in \mathscr{S}$ is mapped according to
Eq.~(\ref{eq:mapping}) in order to obtain the binary check matrix
$H=(A_{X}|A_{Z})$ in which each row corresponds to a generator, with
rows of $A_{X}$ formed by $\mathbf{v}$ and rows of $A_{Z}$ formed by
$\mathbf{u}$ vectors.  For generality, we assume that the matrix
$H$ may also contain unimportant linearly dependent rows which are
added after the mapping has been done.  The commutativity of
stabilizer generators corresponds to the following condition on the
binary matrices $A_{X}$ and $A_{Z}$:
\begin{equation}
A_{X}A_{Z}^{T}+A_{Z}A_{X}^{T}=0 \;(\mod 2).\label{eq:product}
\end{equation}

A more narrow set of Calderbank-Shor-Steane (CSS) codes
\cite{Calderbank-Shor-1996} contains codes whose stabilizer generators can be
chosen to contain products of only Pauli $X$ or Pauli $Z$
operators.  For these codes the parity check matrix can be chosen in the form:
\begin{equation}
H=\left(\begin{array}{c|c}
G_{X} & 0\\
0 & G_{Z}
\end{array}\right),\label{eq:CSS}
\end{equation}
where the commutativity condition simplifies to $G_{X}G_{Z}^{T}=0$.  

The dimension of a quantum code is $k=n-\rank H$; for a CSS code this
simplifies to $k=n-\rank G_{X}-\rank G_{Z}$.  

The distance $d$ of the quantum code is given by the minimum weight of an
operator $U$ which commutes with all operators from the stabilizer
$\mathscr{S}$, but is not a part of the stabilizer, $U\not\in \mathscr{S}$.
In terms of the binary vector pairs $(\mathbf{a},\mathbf{b})$, this is
equivalent to a minimum weight of the bitwise OR $(\mathbf{a}|\mathbf{b})$ of
all pairs satisfying the symplectic orthogonality condition,
\begin{equation}
  A_X \mathbf{b}+A_Z \mathbf{a}=0,\label{eq:symplectic}  
\end{equation}
which are not linear combinations of the rows of $H$.

\section{Toric codes and rotated toric codes}
\subsection{Canonical construction}

We consider the toric codes\cite{kitaev-anyons} in the restricted sense, with qubits located
on the bonds of a square lattice $L_\xi\times L_\eta$, with periodic boundary
conditions along the directions $\xi$ and $\eta$.  The stabilizer generators
$A_i\equiv \prod_{j\in\square_i}X_j $ and $B_i\equiv \prod_{j\in +_i}Z_j$ are
formed as the products of $X_j$ around each plaquette, and $Z_j$ around each
vertex (this defines a CSS code).  The corresponding block length is
$n=2L_\xi L_\eta$, and there are $r_A=r_B=L_\xi L_\eta-1$ independent
generators of each kind, which leaves us with the code of size
$k=n-r_A-r_B=2$.  This code is degenerate:
the degeneracy group is formed by products of the generators $A_i$, $B_i$; its
elements can be visualized as (topologically trivial) loops drawn on the
original lattice (in the case of products of $A_i$), or the dual lattice in
the case of products of $B_i$.  The two sets of logical operators are formed
as the products of $X$ ($Z$) operators along the topologically non-trivial
lines formed by the bonds of the original (dual) lattice (see
Fig.~\ref{fig:toric}).  The code distance $d=\min(L_\xi,L_\eta)$ is given by
the minimal weight of such operators.
\begin{figure}[htbp]
  \centering
  \includegraphics[scale=1.41]{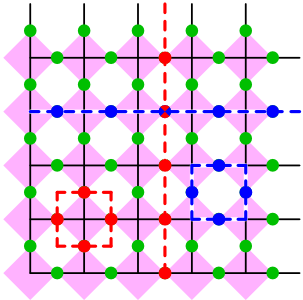}
  \caption{(Color online) Lattice representing the canonical toric code
    $[[50,2,5]]$.  The generators $A_i$ are formed by Pauli $X$ generators
    around a plaquette (blue square) while the generators $B_i$ are formed by
    Pauli $Z$ generators around a vertex (red square).  Dashed horizontal blue
    line and vertical red line represent a pair of mutually conjugate logical
    operators formed by the products of $X$ and $Z$ respectively.  Shading
    corresponds to an alternative checkerboard representation of the
    underlying lattice.}
  \label{fig:toric}
\end{figure}
\subsection{Checkerboard codes \protect\cite{Bombin-2007}}
\label{sec:checkerboard}
In the following, it will be convenient to consider a lattice with qubits
placed on the vertices.  Then, if we color every other plaquette to form a
checkerboard pattern, we can define the operators $A_i$ as products of $X$
operators around the colored plaquettes, and the operators $B_i$ as products
of $Z$ operators around the white plaquettes (see Fig.~\ref{fig:checker},
Left).  Now, the checkerboard code with $n=L_x L_y$, where both $L_x$ and
$L_y$ are even, can be defined by taking periodic boundary conditions on the
sides of a rectangle of size $L_x\times L_y$.  The condition ensures that we
can maintain a consistent checkerboard pattern.  Then, the product of all
$A_i$ (or of all $B_i$) gives identity.  Thus, the stabilizer is formed by
$n-2$ independent generators, which again gives $k=2$ as in the regular toric
codes.  The two sets of logical operators are formed by the products of $X$
operators along the topologically non-trivial paths drawn through the colored
areas, and the products of $Z$ operators along the topologically non-trivial
paths through the white areas (see Fig.~\ref{fig:checker}, Left).  The
distance of the code, $d=\min(L_x,L_y)$, corresponds to the shortest
topologically non-trivial chain of qubits, graphically, a horizontal or
a vertical straight line.

\begin{figure}[htbp]
  \centering
  \includegraphics[scale=1.0]{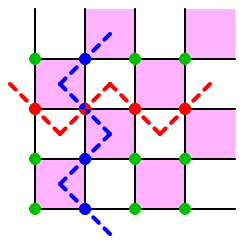}\hskip0.5in
  \includegraphics[scale=1.0]{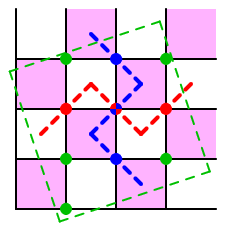}
  \caption{Left: Lattice representation of the checkerboard code $[[16,2,4]]$.
    Qubits are placed in the lattice vertices; dashed blue and red lines
    represent a pair of logical operators as in Fig.~\ref{fig:toric}.  Right:
    same for the rotated checkerboard code $[[10,2,3]]$.}
  \label{fig:checker}
\end{figure}

\subsection{Checkerboard codes with arbitrary rotation}
Compared to the regular toric codes, the checkerboard codes use half as many
qubits with the same $k$ and distance.  The disadvantage is that the distance
is always even.  This latter restriction can be lifted by introducing
periodicity vectors which are not necessarily parallel to the bonds of the
lattice.  (Note that a similar trick was used in early small-cluster exact
diagonalization studies of the Hubbard model\cite{Dagotto-Joynt-etal-1990}).

Let us define two integer-valued periodicity vectors $\mathbf{L}_i=(a_i,b_i)$, $i=1,2$,
and identify all points on the lattice which can
be connected by a vector of the form $m_1 \mathbf{L}_1+m_2 \mathbf{L}_2$, with
integer $m_i$.  The checkerboard pattern is preserved iff both
$\|\mathbf{L}_i\|_1\equiv |a_i|+|b_i|$ are even, $i=1,2$.  Such a cluster
contains
\begin{equation}
  n=|\mathbf{L}_1\times \mathbf{L}_2|=|a_1 b_2 -b_1 a_2 |
\label{eq:cluster-size}
\end{equation}
vertices, and, again, we have $k=2$ as for the standard checkerboard codes.

Since the qubits in the positions shifted by $\mathbf{L}_i$ are the same, it
is easy to see that our code is identical to that on a cluster with
periodicity vectors, e.g., $\mathbf{L}_1$, $\mathbf{L}_1+\mathbf{L}_2$, and,
generally, a cluster with periodicity vectors
$\mathbf{L}_i'=g_{ij}\mathbf{L}_j$, where the integer-valued matrix $g_{ij}$
has the determinant $\det g=\pm 1$.

For a periodicity vector $\mathbf{L}=(a,b)$ with $a+b$ even, the shortest
topologically non-trivial qubit chain has $\|\mathbf{L}\|_\infty\equiv
\max(|a|,|b|)$ operators which leads to the code distance:
\begin{equation}
d(\mathbf{L}_1,\mathbf{L}_2)=\min_{m_1,m_2}
\|m_1\mathbf{L}_1+m_2\mathbf{L}_2\|_\infty. \label{eq:dist-checker}
\end{equation}


\begin{example}
  A family of near-optimal odd-distance checkerboard codes can be introduced
  by taking $\mathbf{L}_1=(2t+1,1)$, $\mathbf{L}_2=(-1,2t+1)$, $t=1,2,\ldots$.
  Such codes have the parameters $[[1+(2t+1)^2, 2,2t+1]]$;
  explicitly: $[[10,2,3]]$ (illustrated in Fig.~\ref{fig:checker}, Right),
  $[[26,2,5]]$, $[[50,2,7]]$, \ldots.
\end{example}

\begin{example}
  \label{ex:toric}
  The original toric codes are recovered by taking $\mathbf{L}_i$ along the
  diagonals, $\mathbf{L}_1=(L_\xi,L_\xi)$, $\mathbf{L}_2=(-L_\eta,L_\eta)$, so
  that $\|L_i\|_1$ are always even, thus $n=2L_\xi L_\eta$, $k=2$, and
  $d=\min(L_\xi,L_\eta)$.  For odd distances, taking $L_\xi=L_\eta=d$, we have
  the codes $[[18,2,3]]$, $[[50,2,5]]$, $[[98,2,7]]$, \ldots.
\end{example}

\subsection{Non-bipartite rotated toric codes}
We now construct a version of rotated toric codes on clusters with at least
one of the periodicity vectors $\mathbf{L}_i$ violating the checkerboard
pattern, e.g., $\|\mathbf{L}_1\|_1$ odd.  Since the checkerboard pattern cannot be
maintained, we define identical stabilizer generators in a non-CSS form, with the stabilizer
generators $G_i=ZXXZ$ on each plaquette given by the products of $Z$ operators
along one diagonal, and $X$ operators along the other diagonal.  With periodic
boundary conditions, the product of all $G_i$ is an identity, and this is the
only relation between these operators on a non-bipartite cluster.  Thus, here
we have only one encoded qubit, $k=1$.

The operators $G_i$ can be viewed as local Clifford (LC) transformed
$A_i$ or $B_i$ operators of the toric code.  It is easy to see that
the logical operators have to correspond to topologically non-trivial
closed chains of qubits, as for the bipartite case.  However, in order
to close the loop, we have to take only the translation vectors with
$\|\mathbf{L}\|_1$ even.  For example, if $\|\mathbf{L}_1\|_1$ is odd
and particularly small, the minimal chain could wrap twice around the
direction given by $\mathbf{L}_1$.  Since the two turns could share
some of the qubits, it is difficult to come up with a general
expression for the distance.

\begin{example}
  \label{ex:non-bipartite}
  Checkerboard-like codes can be obtained by taking $L_x$ or $L_y$ odd.
  Smallest codes in this family correspond to $L_x=L_y=d$; they have
  parameters $[[d^2,1,d]]$, where $d=2t+1$.  Explicitly, $[[9,1,3]]$,
  $[[25,1,5]]$, $[[49,1,7]]$, \ldots.
\end{example}
\begin{example}
 \label{ex:toric-rotated}
  A family of smallest odd-distance rotated toric codes
  \cite{Kovalev-Dumer-Pryadko-2011} is obtained for $\mathbf{L}_1=(t+1,t)$,
  $\mathbf{L}_2=(-t,t+1)$, $t=1,2,\ldots$.  These codes have the parameters
  $[[t^2+(t+1)^2,1,2t+1]]$.  Explicitly, $[[5,1,3]]$, $[[13,1,5]]$,
  $[[25,1,7]]$, $[[41,1,9]]$, \ldots.
\end{example}

\section{Generalized toric and checkerboard codes}
\subsection{Algebraic representation of hypergraph-product codes}
\label{sec:algebraic}
The finite-rate generalization\cite{Tillich2009} of the toric code relies
on hypergraph theory, with the square lattice generalized to a product of
hypergraphs (each corresponding to a parity check matrix of a classical binary
code).  We first recast the original construction into an algebraic language.

Let $\mat{H}_1$ (dimensions $r_{1}\times n_{1}$) and $\mat{H}_2$ (dimensions
$r_{2}\times n_{2}$) be two binary matrices.  The associated (hypergraph-product)
quantum code\cite{Tillich2009} is a CSS code with the stabilizer
generators
\begin{equation}
\begin{array}{c}
  \displaystyle 
  G_{X}=(E_{2}\otimes\mathcal{H}_{1},\mathcal{H}_{2}\otimes E_{1}),\\
  \displaystyle 
  G_{Z}=(\mathcal{H}_{2}^{T}\otimes\widetilde{E}_{1},\widetilde{E}_{2}
  \otimes\mathcal{H}_{1}^{T}). 
\end{array}\label{eq:Till}
\end{equation}
Here each matrix is composed of two blocks constructed as Kronecker products
(denoted with ``$\otimes$''), and $E_i$ and $\widetilde{E}_i$, $i=1,2$, are
unit matrices of dimensions given by $r_i$ and $n_i$, respectively.  The
matrices $G_X$ and $G_Z$, respectively, have $r_1r_2$ and $n_1n_2$ rows (not
all of the rows are linearly independent), and they both have
$n\equiv r_2 n_1+r_1 n_2$ columns, which gives the block length of the quantum
code.  The commutativity condition $G_X G_Z^T=0$ is obviously satisfied by
Eq.~(\ref{eq:Till}) since the Kronecker product obeys $(A\otimes B)(C\otimes
D)=AC \otimes BD$.

Note that the construction~(\ref{eq:Till}) is somewhat similar to product
codes introduced by Grassl and R\"otteler\cite{Grassl-Rotteler-2005}.  The
main difference is that here the check matrix and not the generator matrix is
written in terms of direct products.

The parameters $[[n,k,d]]$ of thus constructed quantum code are determined by
those of the four classical codes which use the matrices $\mat{H}_1$,
$\mat{H}_2$, $\mat{H}_1^T$, and $\mat{H}_2^T$ as the parity-check matrices.
The corresponding parameters are introduced as
\begin{equation}
  \label{eq:params}
  \mat{C}_{\mat{H}_i}=[n_i,k_i,d_i],\quad 
   \mat{C}_{\mat{H}_i^T}=[{\widetilde
     n}_i,\widetilde{k}_i,\widetilde{d}_i],\quad i=1,2,
\end{equation}
where we use the convention \cite{Tillich2009} that the distance
$d_i(\widetilde{d}_i)= \infty $ if $k_i(\widetilde{k}_i)=0$.
The matrices $\mat{H}_i$ are arbitrary, and are allowed to have
linearly-dependent rows and/or columns.  As a result, both
$k_i=n_i-\rank\mat{H}_i$ and $\widetilde k_i=\widetilde
n_i-\rank\mat{H}_i$ can be non-zero at the same time as the block
length of the ``transposed'' code $\mat{C}_{\mat{H}_i^T}$ is given by
the number of rows of $\mat{H}_i$, $\widetilde n_i=r_i$.

Specifically, for the hypergraph-product code (\ref{eq:Till}), we have
$n=r_2n_1+r_1n_2$, $k=2k_1k_2-k_1s_2-k_2s_1$ with $s_i=n_i-r_i,\, i=1,2 \,$ (Theorem 7 from
Ref.~\cite{Tillich2009}), while the distance $d$ satisfies the conditions
$d\ge \min(d_1,d_2,\widetilde d_1, \widetilde d_2)$ (Theorem 9 from
Ref.~\cite{Tillich2009}), and two upper bounds (Lemma 10 from
Ref.~\cite{Tillich2009}): if $k_1>0$ and $\widetilde k_2>0$, then $d\le d_1$; if
$k_2>0$ and $\widetilde k_1>0$, then $d\le d_2$.

These parameters can also be readily established from the stabilizer
generators in the form of Eq.~(\ref{eq:Till}).  
For example, the dimension of the quantum code follows from 
\begin{statement}\label{prop:full} The number of linearly independent
  rows in matrices $G_X$ and $G_Z$ given by Eq.~ (\ref{eq:Till}) is
  $\rank G_X=r_{1}r_{2}-\widetilde{k}_{1}\widetilde{k}_{2}$ and $\rank
  G_Z =n_{1}n_{2}-k_{1}k_{2}$.
\end{statement}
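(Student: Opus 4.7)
The plan is to compute the dimension of the left kernel of each block matrix directly, by reshaping a left null vector into a matrix and reading off the defining equations as conditions on its rows and columns. I would then recognize the solution space as a tensor product of two kernels whose dimensions are already expressed in terms of the $k_i$ and $\widetilde k_i$.

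Concretely, for $G_X$ I would identify a vector $w\in\mathbb{F}_2^{r_1r_2}$ with an $r_2\times r_1$ matrix $W$ via $W_{a,b}=w_{(a,b)}$, using the Kronecker-product index convention in which $A\otimes B$ has entry $A_{a,a'}B_{b,b'}$ at position $((a,b),(a',b'))$. With this reshape, $w^T(E_2\otimes \mat{H}_1)=0$ becomes $\sum_{b}W_{a',b}(\mat{H}_1)_{b,c}=0$ for every $a',c$, i.e.\ every row of $W$ lies in the left kernel of $\mat{H}_1$; and $w^T(\mat{H}_2\otimes E_1)=0$ becomes $\sum_a (\mat{H}_2)_{c,a}W_{a,b'}=0$ for every $b',c$, i.e.\ every column of $W$ lies in the left kernel of $\mat{H}_2$. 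Since $\widetilde n_i=r_i$, these kernels have dimensions $r_i-\rank \mat{H}_i=\widetilde k_i$. The subspace of $r_2\times r_1$ matrices whose rows lie in a fixed subspace $V\subseteq\mathbb{F}_2^{r_1}$ and whose columns lie in a fixed subspace $U\subseteq\mathbb{F}_2^{r_2}$ is precisely $U\otimes V$, of dimension $\dim U\cdot\dim V$. Hence the left kernel of $G_X$ has dimension $\widetilde k_1\widetilde k_2$, and $\rank G_X=r_1r_2-\widetilde k_1\widetilde k_2$.

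The argument for $G_Z$ is strictly parallel: a left null vector is reshaped into an $n_2\times n_1$ matrix $W$, and the two block equations $w^T(\mat{H}_2^T\otimes\widetilde E_1)=0$ and $w^T(\widetilde E_2\otimes \mat{H}_1^T)=0$ force every column of $W$ into $\ker \mat{H}_2$ and every row of $W$ into $\ker \mat{H}_1$, of dimensions $k_2$ and $k_1$ respectively. The same tensor-product dimension count then yields $\rank G_Z=n_1n_2-k_1k_2$.

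I do not anticipate any essential obstacle beyond the index bookkeeping: one must pin down which factor of the Kronecker product indexes the rows versus the columns of the reshape, so that the two block equations truly decouple into independent conditions on rows and columns of $W$ (and so that the rank counts land on $\widetilde k_1\widetilde k_2$ rather than on some wrong combination). Once that convention is fixed, both rank identities fall out of the same tensor-product argument.
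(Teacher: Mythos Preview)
Your proposal is correct and follows essentially the same approach as the paper's proof: both compute the left kernel of $G_X$ (resp.\ $G_Z$) and identify it with the tensor product $\mathcal{C}_{\mat{H}_2^T}\otimes\mathcal{C}_{\mat{H}_1^T}$ (resp.\ $\mathcal{C}_{\mat{H}_2}\otimes\mathcal{C}_{\mat{H}_1}$). Your matrix-reshape formulation is slightly more explicit than the paper's pure-tensor ansatz $a^T\otimes b^T$, since it directly shows that \emph{every} left null vector---not just the decomposable ones---lies in the tensor product.
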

\begin{proof} The matrices $G_X$ and $G_Z$ have
  $r_1r_2$ and $n_1n_2$ rows, respectively. To count the number of
  linearly-dependent rows in $G_X$, we notice that the equations
  $(a^T\otimes b^T)\cdot (E_2\otimes \mat H_1)=0$ and $(a^T\otimes b^T)\cdot
  (\mat H_2\otimes E_1)=0$ are both satisfied iff $a\in \mat
  C_{\mat H_2^T}$ and $b\in \mat C_{\mat H_1^T}$, thus there are $\widetilde
  k_1 \widetilde k_2$ linear relations between the rows of $G_X$,
  and we are left with $r_1r_2-\widetilde k_1 \widetilde k_2$
  linearly-independent rows.  Similarly, there are
  $n_1n_2-k_1k_2$ linearly independent rows in $G_Z$.
\end{proof}

To prove the lower bound on the distance, consider a vector
$\mathbf{u}$ such that $G_X \cdot \mathbf{u}=0$ and
$\wgt(\mathbf{u})<d$.  We construct a quantum code in the form
(\ref{eq:Till}) from the matrices $\mat{H}_1'$, $\mat{H}_2'$ formed
only by the columns of respective $\mat{H}_i$, $i=1, 2$, that are
involved in the product $G_X \cdot \mathbf{u}$.  According to
Proposition~\ref{prop:full}, the reduced code has $k=0$, so that the
reduced $\mathbf{u}'$, $G_X'\cdot \mathbf{u}'=0$, has to be a linear
combination of the rows of $G_Z'$.  The rows of $G_Z'$ are a subset of
those of $G_Z$, with some all-zero columns removed; thus the full vector $\mathbf{u}$ is also a linear combination of
the rows of $G_Z$.  Similarly, a vector $\mathbf{v}$ such that $G_Z
\cdot \mathbf{v}=0$ and $\wgt(\mathbf{v})<d$, is a linear combination
of rows of $G_X$.

The upper bound is established by considering vectors
$\mathbf{u}\equiv (\mathbf{e}\otimes \mathbf{c},0)$ with
$\mathbf{c}\in\mathcal{C}_{\mat{H}_1}$, which requires $k_{1}>0$.
Vector $\mathbf{e}$, $\wgt(\mathbf{e})=1$, for which $\mathbf{u}$ is
not a linear combination or rows of $G_{Z}$, exists only when
$\widetilde{k}_{2}>0$.  The other upper bound is established by
considering vectors $(0,\mathbf{c}\otimes \mathbf{e})$ with
$\mathbf{c}\in\mathcal{C}_{\mat{H}_2}$.

\subsection{Original code family from full-rank matrices}
\label{sec:orig}
In Ref.~\cite{Tillich2009}, only one large family of
quantum codes based on the hypergraph-product ansatz~(\ref{eq:Till})
is given.  Namely, the matrix $\mathcal{H}_{1}$ is taken as a
full-rank parity matrix of a binary LDPC code with parameters
$\mathcal{C}_{\mat{H}_1}=[n_1,k_1,d_1]$ ($r_1=n_1-k_1$), so that the
transposed code has dimension zero, $\widetilde k_1=0$.  The second
matrix is taken as $\mathcal{H}_{2}=\mathcal{H}_{1}^{T}$, so that
$\mathcal{C}_{\mat{H}_2^T}=\mathcal{C}_{\mat{H}_1}$.  Then
Eq.~(\ref{eq:Till}) defines a quantum LDPC code with parameters
\begin{equation}
  \mathcal{Q}^\mathrm{orig}=[[(n_1-k_1)^{2}+n_1^{2},k_1^{2},d_1]],
  \label{eq:par-orig} 
\end{equation}
where the weight of each row of $G_X$, $G_Z$ equals to the sum of the
row-weight and the column-weight of $\mathcal{H}_1$.

\begin{example}
  Let $\mathcal{H}_1$ be a parity-check matrix of the repetition code
  $[d,1,d]$.  Then the quantum code has the parameters
  $[2d^2-2d+1,1,d]$.  Explicitly, $[[13,1,3]]$, $[[25,1,4]]$,
  $[[41,1,5]]$,\ldots --- these parameters are inferior compared to
  the original toric code family, cf. Examples \ref{ex:non-bipartite},
  \ref{ex:toric-rotated}.
\end{example}

\subsection{Code family from square matrices}
\label{sec:squar}
Instead of using full-rank parity-check matrices\cite{Tillich2009},
let us start with a pair of binary codes with square parity-check
matrices $\mat{H}_i$, such that $\widetilde d_1=d_1$, $\widetilde
d_2=d_2$.  Then, automatically, $\widetilde k_i=k_i=n_i-\rank
\mat{H}_i$.  The hypergraph-product ansatz~(\ref{eq:Till}) gives the
code with the parameters
\begin{equation}
  \mathcal{Q}^{\rm square}=[[2n_1
n_2, 2k_1k_2, \min(d_1,d_2)]].\label{eq:par-squar}
\end{equation}
Note that the rate $R=k/n$ of this family is
up to twice that of the family originally suggested in
Ref.~\cite{Tillich2009}, see Sec.~\ref{sec:orig}.  

\begin{example}
  The standard toric codes are recovered by taking for $\mat{H}_2=\mat{H}_1$
  the circulant matrix of a repetition code. The code parameters are $[[2d_1^2,2,d_1]]$,
  cf.\ Example \ref{ex:toric}.
\end{example}

We suggest two general ways to obtain suitable square parity check matrices.
First, if we start from an $[n_1,k_1,d_1]$ LDPC code with the full-rank parity check matrix $P$, we can construct the following symmetric matrix,
\begin{equation}
  \label{eq:symmetrization}
  \mat{H}_1^{\rm sym}=\left(
  \begin{array}[c]{cc}
    \mathbb1&P\\P^T&0
  \end{array}\right),
\end{equation}
so that the code $\mathcal{C}_{\mat{H}_1^{\rm sym}}$ is a $[2n_1-k_1,k_1,d_1]$ LDPC code.  

Second construction assumes that $\mathcal{C}_{\mat{H}_i}$ are cyclic LDPC codes.  The full
circulant matrices $\mat{H}_i$ are constructed from coefficients of check polynomials $h_i(x)$.  The check polynomials of the
transposed code, $\widetilde h_i(x)=h_i(x^{n_i-1})\mod (x^{n_i}-1)$, are just
the original check polynomials reversed, and the original and transposed codes have the same
parameters.  

\subsection{Code family from symmetric matrices.}
If we have two symmetric parity-check matrices,
$\mat{H}_i=\mat{H}_i^T$, $i=1,2$ [e.g., from
Eq.~(\ref{eq:symmetrization})], the full hypergraph-product
code~(\ref{eq:Till}) can be transformed into a direct sum of two
independent codes, each with the following non-CSS check matrix
\begin{equation}
  \label{eq:check-symm}
  H=(E_2\otimes\mat{H}_1|\mat{H}_2\otimes E_1), \quad
  \mat{H}_i^T=\mat{H}_i,\quad i=1,2.
\end{equation}This gives the following
 \begin{theorem}\label{th:symmetrized}A quantum code
in Eq.~(\ref{eq:check-symm}) has parameters
\begin{equation}
  \mathcal{Q}^{\rm sym}=[[n_1
n_2, k_1k_2, \min(d_1,d_2)]].\label{eq:par-symm}
\end{equation}
\end{theorem}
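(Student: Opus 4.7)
The plan is to establish the three parameters of $\mathcal{Q}^{\rm sym}$ in sequence. That $n=n_1 n_2$ is immediate from the column count of $E_2\otimes\mat{H}_1$, and the commutation $A_X A_Z^T+A_Z A_X^T=0\pmod 2$ reduces, via $\mat{H}_i^T=\mat{H}_i$ and $(A\otimes B)(C\otimes D)=AC\otimes BD$, to $(\mat{H}_2\otimes\mat{H}_1)+(\mat{H}_2\otimes\mat{H}_1)\equiv 0$. For $k$ I would compute the number of linear relations among the rows of $H$ by viewing a coefficient vector $v$ of length $n_1 n_2$ as an $n_2\times n_1$ matrix $V$: the condition $v^T(E_2\otimes\mat{H}_1)=0$ forces every row of $V$ to lie in $\ker\mat{H}_1$, while $v^T(\mat{H}_2\otimes E_1)=0$ forces every column of $V$ to lie in $\ker\mat{H}_2$. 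The set of such $V$ is exactly the tensor-product subspace $\ker\mat{H}_2\otimes\ker\mat{H}_1$ of dimension $k_1 k_2$, so $\rank H=n_1 n_2-k_1 k_2$ and $k=k_1 k_2$.

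For the upper bound $d\le\min(d_1,d_2)$ I would exhibit explicit low-weight logical operators. Pick a minimum-weight codeword $\mathbf{c}\in\ker\mat{H}_1$ and an index $i\in[n_2]$ with $\mathbf{e}_i$ outside the row span of $\mat{H}_2$ (such $i$ exists because $k_2>0$). Setting $(A,B)=(0,\mathbf{e}_i\mathbf{c}^T)$, the commutation $\mat{H}_2 A+B\mat{H}_1=\mathbf{e}_i(\mat{H}_1\mathbf{c})^T=0$ follows from symmetry, and the pair is non-trivial because its nonzero columns are scalar multiples of $\mathbf{e}_i$ and therefore cannot lie in the column span of $\mat{H}_2$. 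The weight equals $d_1$; a mirror construction yields a weight-$d_2$ logical.

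For the lower bound $d\ge\min(d_1,d_2)$ I would prove the Clifford equivalence $\mathcal{Q}^{\rm square}\simeq \mathcal{Q}^{\rm sym}\otimes \mathcal{Q}^{\rm sym}$ asserted in the paragraph preceding the theorem. Applying Hadamard to every block-2 qubit converts each row of $G_X$ into a mixed operator with $X$-support $(E_2\otimes\mat{H}_1)_{i,:}$ on block 1 and $Z$-support $(\mat{H}_2\otimes E_1)_{i,:}$ on block 2, and dually for $G_Z$; using $\mat{H}_i^T=\mat{H}_i$, the product of the $i$-th generators of the two types factors as $G_i^{(1)}\otimes G_i^{(2)}$, where $G_i$ is the $i$-th row of~(\ref{eq:check-symm}). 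A subsequent qubit-pair-wise Clifford decouples these ``doubled'' generators into two block-supported copies of the non-CSS stabilizer. Since distance is a Clifford invariant and the tensor product of identical stabilizer codes has distance equal to the common factor distance, Tillich-Z\'emor's Theorem~9 applied to the square-symmetric case (in which $\widetilde d_i=d_i$) gives $d_{\mathcal{Q}^{\rm square}}\ge \min(d_1,d_2)$, matching the upper bound.

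The hardest part is specifying the block-local Clifford in the last decoupling step: after the Hadamard, each generator is a factorized operator $G_i^{(1)}\otimes G_i^{(2)}$ whose tensor factors are individually not in the stabilizer, and a further symplectic change of basis on each pair $((1,q),(2,q))$ is required to separate them. A cleaner alternative is to adapt the Tillich-Z\'emor shortening argument (used beneath Proposition~\ref{prop:full} for the CSS case) directly to the non-CSS setting: given $(A,B)$ of weight $w<\min(d_1,d_2)$ with $B\mat{H}_1+\mat{H}_2 A=0$, restrict $\mat{H}_1,\mat{H}_2$ to the columns active in the support of $A,B$; the rank argument of the first paragraph applied to the shortened matrices yields $k=0$ for the reduced code, forcing $(A,B)$ to be a row combination of $H$.
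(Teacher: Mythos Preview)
Your computations of $n$, of $k=k_1k_2$ via the left-nullspace argument, and of the upper bound $d\le\min(d_1,d_2)$ are all correct and in fact more explicit than anything the paper provides. The paper's own ``proof'' is only the sentence preceding the theorem: it asserts that, for symmetric $\mat H_i$, the CSS hypergraph-product code~(\ref{eq:Till}) ``can be transformed into a direct sum of two independent codes'' of the form~(\ref{eq:check-symm}); the parameters then follow from the already-established Eq.~(\ref{eq:par-squar}). So your route~(a) is exactly the paper's route, just with an attempt to spell out the transformation that the paper leaves implicit.

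That attempt, however, has a real obstruction. After your Hadamard on block~2, the two families of generators act on each qubit pair $(q^{(1)},q^{(2)})$ as one of $\{X\otimes I,\,I\otimes Z,\,X\otimes Z\}$ (first family) and $\{I\otimes X,\,Z\otimes I,\,Z\otimes X\}$ (second family), depending on whether $q\in a_i$, $q\in b_i$, or both. A qubit-pair-wise Clifford that sends the first family to block-1-only operators would have to send the commuting pair $X\otimes I$ and $I\otimes Z$ to two \emph{distinct} nontrivial single-qubit Paulis on block~1; but any two such Paulis anticommute, a contradiction. Hence no single two-qubit Clifford applied uniformly to all pairs can realise the decoupling you describe. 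Whatever transformation the paper has in mind, it is not of this form.

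Your alternative~(b) is the more promising line, but the sketch as written does not close. Restricting $\mat H_1,\mat H_2$ to the active columns destroys squareness and symmetry, so the reduced object is no longer a code of the form~(\ref{eq:check-symm}) and the $k_1k_2$ rank formula cannot be invoked directly. A clean repair stays in matrix form: from $B\mat H_1=\mat H_2A$ one gets, for every $v\in\ker\mat H_2$, that $v^TB\in\ker\mat H_1$; since $v^TB$ is supported on $|S_1|<d_1$ columns, $v^TB=0$, so each column of $B$ lies in $(\ker\mat H_2)^\perp=\mathrm{rowspan}\,\mat H_2$ and $B=\mat H_2W'$. Symmetrically $A=W''\mat H_1$. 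Then $\mat H_2(W'{+}W'')\mat H_1=0$, and a dimension count shows every such matrix decomposes as $D'+D''$ with $D'\mat H_1=0$ and $\mat H_2D''=0$, allowing $W'$ and $W''$ to be adjusted to a common $W$ with $A=W\mat H_1$, $B=\mat H_2W$. This gives the lower bound directly and avoids the decoupling issue altogether.
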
  Thus, we can reduce by half
both the blocklength and the number of encoded qubits, i.e., keeping the
rate of Eq.~(\ref{eq:par-squar}) but doubling the relative distance.

For a cyclic LDPC code $\mathcal{C}_{\mat{H}}$ with a \emph{palindromic}
check polynomial, $x^{\deg h(x)}h(1/x)=h(x)$, such that $n-\deg h(x)$
is even, we can always construct a symmetric circulant matrix
$\mat{H}$ from the polynomial $x^{[n-\deg h(x)]/2}h(x)$.
\begin{example} If $\mathcal{H}_{1}=\mathcal{H}_{2}$ are symmetric
  check matrices of a cyclic $[n_1,k_1,d_1]$ code corresponding to a
  palindromic polynomial $h(x)$, then the quantum code has parameters
  $[[n_1^{2},k_1^{2},d_1]]$.  In particular, for $n_1=17$ and
  $h(x)=1+x^3+x^4+x^5+x^6+x^9$ we obtain $[[289,81,5,w=12]]$ code, and
  for $h(x)=1+x$, we recover the non-bipartite checkerboard codes from
  Example \ref{ex:non-bipartite}.
\end{example}

\subsection{Code family from two-tile codes}

Finally, let us construct a generalization of the regular {}``bipartite''
checkerboard codes. We start with a pair of binary codes with the
parity check matrices of even size 
\begin{equation}
  \mathcal{H}_{1}=\tbinom{1 0}{0 1} \otimes a_{1}+\tbinom{0 1}{1 0}\otimes
  b_{1}, 
  \quad\mathcal{H}_{2}^{p}=a_{2}\otimes\tbinom{1 0}{0 1}+b_{2}\otimes\tbinom{0 1}{1 0},
  \label{eq:check-tiled}  
\end{equation}
constructed from the half-size matrices  ({}``tiles'') $a_{i}$, $b_{i}$ with the
distances of the classical codes
$\mathcal{C}_{\mathcal{H}_{i}}$ and
$\mathcal{C}_{\mathcal{H}_{i}^{T}}$ given by $d_{i}$ and
$\widetilde{d}_{i}$, $i=1,2$, where the check matrix
$\mathcal{H}_{2}=\tbinom{1 0}{0 1} \otimes a_{2}+\tbinom{0 1}{1 0}\otimes
  b_{2}$ is
equivalent to $\mathcal{H}_{2}^{p}$ and can be rendered to the latter
form by row and column permutations. It is convenient to introduce
notation for the dimensionality of symmetric subspaces of
$\mathcal{C}_{\mathcal{H}_{1}}$ and $\mathcal{C}_{\mathcal{H}_{2}^p}$
containing only words of type ${1 \choose 1}\otimes\alpha_{1}$ and
$\alpha_{2}\otimes{1 \choose 1}$ as $k_{i}^{s}\equiv n_{i}/2-\rank
(a_{i}+b_{i})$, and for asymmetric subspaces as
$k_{i}^{a}\equiv k_{i}-k_{i}^{s}$, $i=1,2$ (analogously we define
$\widetilde{k}_{i}^{s}$ and $\widetilde{k}_{i}^{a}$).  We define
half-size CSS matrices {[}cf.~Eq.~(\ref{eq:Till}){]}
\begin{equation}
\begin{array}{c}
  G_{X}=(E_{2}^{({1/2})}\otimes\mathcal{H}_{1},\mathcal{H}_{2}^{p}
  \otimes E_{1}^{({1/2})}),\\ 
  G_{Z}=(\mathcal{H}_{2}^{pT}\otimes\widetilde{E}_{1}^{({1/2})},
  \widetilde{E}_{2}^{({1/2})}\otimes\mathcal{H}_{1}^{T}),
\end{array}\label{eq:Tor1}
\end{equation}
where the identity matrices $E_{i}^{(1/2)}$,
$\widetilde{E}_{i}^{(1/2)}$ have dimensions $r_{i}/2$, $n_{i}/2$,
half-size compared to those in Eq.~(\ref{eq:Till}).
\begin{statement}\label{th:lemma}
  The numbers of linearly independent rows in matrices~(\ref{eq:Tor1})
  are $\rank
  G_X=r_{1}r_{2}/2-\widetilde{k}_{1}^{s}\widetilde{k}_{2}^{s}-
  \widetilde{k}_{1}^{a}\widetilde{k}_{2}^{a}$ and $\rank G_Z
  =n_{1}n_{2}/2-k_{1}^{s}k_{2}^{s}-k_{1}^{a}k_{2}^{a}$.
\end{statement}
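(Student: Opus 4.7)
The plan is to reduce to Proposition~\ref{prop:full} via a basis change that exploits the tile structure. Applying the invertible substitution $S=\tbinom{1\ 0}{1\ 1}$ on the dimension-$2$ (``$J$'') factor shared by the row and column indices of $\mat{H}_1$ and $\mat{H}_2^p$ brings each into block-upper-triangular form with diagonal blocks $(a_i+b_i)$ and upper-right block $b_i$. Carrying the corresponding rank-preserving row and column operations through $G_X$ in Eq.~(\ref{eq:Tor1}), one obtains
\[
G_X \;\sim\; \begin{pmatrix} \tilde G & \tilde G' \\ 0 & \tilde G \end{pmatrix},
\]
where $\tilde G=(E_2^{(1/2)}\otimes(a_1+b_1),(a_2+b_2)\otimes E_1^{(1/2)})$ is precisely the hypergraph-product $G_X$-matrix for the pair of classical codes with parity checks $(a_1+b_1)$ and $(a_2+b_2)$, and $\tilde G'=(E_2^{(1/2)}\otimes b_1,\, b_2\otimes E_1^{(1/2)})$.

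For this block-upper-triangular matrix the number of linearly dependent rows equals the number of dependencies among rows of $\tilde G$, plus the dimension of the subspace of left-null combinations $u_0$ of $\tilde G$ for which $u_0\tilde G'$ nonetheless lies in the row span of $\tilde G$. Applying Proposition~\ref{prop:full} to $\tilde G$ gives the first summand as $\widetilde k_1^s\widetilde k_2^s$ (since the classical codes $\mat{C}_{(a_i+b_i)^T}$ have dimension $\widetilde k_i^s = r_i/2-\rank(a_i+b_i)$) and identifies the left-null space as $\mat{C}_{(a_2+b_2)^T}\otimes\mat{C}_{(a_1+b_1)^T}$.

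The second summand is where the real work lies. Unwinding the definition of $\tilde G'$, the condition ``$u_0\tilde G'\in$ row span of $\tilde G$'' amounts to the simultaneous solvability in $v$ of the pair of equations $v(a_1+b_1)=u_0 b_1$ and $(a_2+b_2)^T v = b_2^T u_0$. I will introduce the obstruction maps $\psi_i\colon\mat{C}_{(a_i+b_i)^T}\to\mathbb{F}_2^{n_i/2}/\text{row}(a_i+b_i)$, $w\mapsto wb_i$, so that each equation alone is solvable precisely when the rows (resp.\ columns) of $u_0$ lie in $\ker\psi_1$ (resp.\ $\ker\psi_2$); a direct parametrization of $\mat{C}_{\mat{H}_i^T}$ using the substitution $(w_0,w_1)\mapsto(w_0,w_0+w_1)$ then gives $\dim\ker\psi_i=\widetilde k_i^a$. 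The main obstacle is to show that the conjunction of the two individual conditions suffices for joint solvability. I will dispatch this on rank-$1$ tensors $u_0 = a\otimes b$ (with $a\in\ker\psi_2$, $b\in\ker\psi_1$) by writing down the explicit witness $v = a\otimes\xi_1 + \xi_2^T\otimes b$ (the $\xi_i$ being preimages supplied by $\ker\psi_i$) and extending by bilinearity to all of $\ker\psi_2\otimes\ker\psi_1$, a space of dimension $\widetilde k_1^a\widetilde k_2^a$.

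Summing yields $\widetilde k_1^s\widetilde k_2^s+\widetilde k_1^a\widetilde k_2^a$ linearly dependent rows in $G_X$, hence the stated $\rank G_X$. The claim for $G_Z$ follows by the symmetric argument with $\mat{H}_i$ replaced by $\mat{H}_i^T$, which swaps $(r_i,\widetilde k_i^{s},\widetilde k_i^{a})$ with $(n_i, k_i^{s},k_i^{a})$ in the final formula.
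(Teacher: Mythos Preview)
Your proof is correct and reaches the same count, but the route differs from the paper's. The paper works directly in the original coordinates: it writes a ``swap-symmetric'' ansatz $\upsilon=\alpha_1\otimes\tbinom{\alpha_3}{\alpha_4}+\alpha_2\otimes\tbinom{\alpha_4}{\alpha_3}$ for elements of the left null space of $G_X$, and asserts that the null-space conditions force either the fully symmetric case $\upsilon=\alpha_1'\otimes\tbinom{1}{1}\otimes\alpha_3'$ with $\alpha_1'\in\mathcal{C}_{(a_2+b_2)^T}$, $\alpha_3'\in\mathcal{C}_{(a_1+b_1)^T}$ (contributing $\widetilde k_1^s\widetilde k_2^s$), or the asymmetric case $\tbinom{\alpha_1}{\alpha_2}\in\mathcal{C}_{\mat H_2^T}$, $\tbinom{\alpha_3}{\alpha_4}\in\mathcal{C}_{\mat H_1^T}$ (contributing $\widetilde k_1^a\widetilde k_2^a$). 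You instead conjugate the shared $\mathbb{F}_2^2$ factor by $S=\tbinom{1\ 0}{1\ 1}$, which brings $G_X$ to block-upper-triangular form with diagonal block $\tilde G$ equal to the hypergraph-product $G_X$ for the half-size checks $a_i+b_i$; then Proposition~\ref{prop:full} gives the diagonal contribution $\widetilde k_1^s\widetilde k_2^s$, and your obstruction maps $\psi_i$ together with the explicit rank-one witness $V=a\otimes\xi_1+\xi_2\otimes b$ pin down the off-diagonal contribution as exactly $\widetilde k_1^a\widetilde k_2^a$. The two arguments are in fact dual---your basis change $S$ is precisely what separates the paper's ``symmetric'' and ``asymmetric'' pieces---but your version is more modular (it reuses Proposition~\ref{prop:full} verbatim) and more carefully argued: the paper's ansatz is rank~$\le2$ in the tensor sense and the completeness of the enumeration is stated rather than proved, whereas your necessity step (individual solvability forces $U_0\in\ker\psi_2\otimes\ker\psi_1$) and sufficiency step (the rank-one witness plus linearity) close that gap explicitly.
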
 \begin{proof} To count the number of
  linearly-dependent rows in $G_X$, we notice that the equations
  $\upsilon^{T}\cdot(E_{2}^{({1/2})}\otimes\mathcal{H}_{1})=0$ and
  $\upsilon^{T}\cdot(\mathcal{H}_{2}\otimes E_{1}^{({1/2})})=0$ are
  both satisfied for ansatz \begin{equation}
    \upsilon=\alpha_{1}\otimes{\alpha_3 \choose
      \alpha_4}+\alpha_{2}\otimes{\alpha_4 \choose \alpha_3},
\end{equation}
if and only if either (i) $\alpha_{1}\neq\alpha_{2}$,
$\alpha_{3}\neq\alpha_{4}$ and ${\alpha_1 \choose
  \alpha_2}\in\mathcal{C}_{\mathcal{H}_{2}^{T}}$, ${\alpha_3 \choose
  \alpha_4}\in\mathcal{C}_{\mathcal{H}_{1}^{T}}$ or (ii)
$\upsilon=\alpha_{1}'\otimes{1 \choose 1}\otimes\alpha_{3}'$ and
$\alpha_{1}'\in\mathcal{C}_{a_{2}^{T}+b_{2}^{T}}$,
$\alpha_{3}'\in\mathcal{C}_{a_{1}^{T}+b_{1}^{T}}$, thus there are
$\widetilde{k}_{1}^{s}
\widetilde{k}_{2}^{s}+\widetilde{k}_{1}^{a}\widetilde{k}_{2}^{a}$
linear relations between the rows in $G_X$, and we are left with
$\rank G_{X}=r_{1}r_{2}/2-\widetilde{k}_{1}^{s}
\widetilde{k}_{2}^{s}-\widetilde{k}_{1}^{a}\widetilde{k}_{2}^{a}$
linearly-independent rows. Similarly, we prove that
$\rank G_{Z}=n_{1}n_{2}/2-k_{1}^{s}k_{2}^{s}-k_{1}^{a}k_{2}^{a}$.
\end{proof} 
\begin{theorem}\label{th:tiled}
  A quantum CSS code in Eqs.~(\ref{eq:check-tiled}) and
  (\ref{eq:Tor1}) has the parameters:
  \begin{equation}
    \begin{array}{c}
      n=(n_{1}r_{2}+n_{2}r_{1})/2,\\
      k=2 k_{1}^s k_{2}^s+2 k_{1}^{a}k_{2}^{a}-k_{1}s_{2}/2-k_{2}s_{1}/2,\\
      d\geq\min(d_{1}/2,d_{2}/2,\widetilde{d}_{1}/2,\widetilde{d}_{2}/2),
    \end{array}\label{eq:parameters}
  \end{equation}
  where $s_{i}=n_{i}-r_{i}$, $i=1,2$. In addition, for $k_{1}>0$ and
  $\widetilde{k}_{2}>0$ the upper bound $d\le d_{1}$ exists and for
  $k_{2}>0$ and $\widetilde{k}_{1}>0$ the upper bound $d\le d_{2}$
  exists.
\end{theorem}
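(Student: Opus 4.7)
The block length follows at once by counting the columns of $G_X$ in Eq.~(\ref{eq:Tor1}): the first block contributes $(r_2/2)n_1$ and the second $(r_1/2)n_2$, so $n=(n_1r_2+n_2r_1)/2$. For the dimension I plan to apply $k=n-\rank G_X-\rank G_Z$ using the rank formulas from Proposition~\ref{th:lemma}. The necessary algebraic inputs are the identities $\widetilde{k}_i^s=k_i^s-s_i/2$ and $\widetilde{k}_i^a=k_i^a-s_i/2$ for $i=1,2$: the first follows from $\rank(a_i+b_i)=\rank(a_i^T+b_i^T)$ combined with the defining relation $k_i^s\equiv n_i/2-\rank(a_i+b_i)$ and its transposed analog, and the second follows by subtracting from the standard identity $k_i-\widetilde{k}_i=s_i$. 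Substituting into $k=n-\rank G_X-\rank G_Z$ and collecting terms yields the stated formula for $k$.

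The distance lower bound I would prove by mirroring the argument sketched for the original hypergraph-product code in Sec.~\ref{sec:algebraic}. Suppose $G_X\mathbf{u}=0$ with $\wgt(\mathbf{u})<\min(d_1,d_2,\widetilde{d}_1,\widetilde{d}_2)/2$. First identify the sets of column indices of $\mathcal{H}_1$ and $\mathcal{H}_2^p$ activated by $\mathbf{u}$, then enlarge each set to its tile-closure, i.e., include both inner tile-indices for every outer index that is used. Since tile-closure at most doubles the sizes, the closed sets have strictly fewer than $\min(d_i,\widetilde{d}_i)$ entries, so the restricted classical codes $\mathcal{C}_{\mathcal{H}_i'}$ and $\mathcal{C}_{(\mathcal{H}_i')^T}$ are trivial. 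Because the restricted matrices retain the tile form of Eq.~(\ref{eq:check-tiled}), Proposition~\ref{th:lemma} applies to the restricted construction and forces the restricted quantum code to have $k'=0$; hence $\mathbf{u}$ is a linear combination of rows of $G_Z$. The symmetric argument applied to $\mathbf{v}\in\ker G_Z$ completes the lower bound.

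For the upper bounds I would construct explicit low-weight logical operators as in the hypergraph-product case. In the case $k_1>0$, $\widetilde{k}_2>0$, take $\mathbf{u}=(\mathbf{e}\otimes\mathbf{c},\,0)$ with $\mathbf{c}\in\mathcal{C}_{\mathcal{H}_1}$ a word of weight $d_1$ and $\mathbf{e}\in\{0,1\}^{r_2/2}$ a basis vector chosen --- using $\widetilde{k}_2>0$ --- so that $\mathbf{u}$ is not a linear combination of the rows of $G_Z$. Then $\mathbf{u}$ is a non-trivial logical operator of weight $d_1$, giving $d\le d_1$. The symmetric construction yields $d\le d_2$ in the case $k_2>0$, $\widetilde{k}_1>0$.

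The principal obstacle will be the tile-closure step in the lower bound: I need to verify that restricting $a_i,b_i$ to tile-closed column and row subsets produces matrices to which Proposition~\ref{th:lemma} still applies, and that the factor $1/2$ appearing in the distance bound is exactly accounted for by the doubling that tile-closure can cause.
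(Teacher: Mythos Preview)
Your proposal is correct and follows essentially the same route as the paper's own proof: the block length by column count, the dimension via Proposition~\ref{th:lemma} (your explicit identities $\widetilde{k}_i^s=k_i^s-s_i/2$ and $\widetilde{k}_i^a=k_i^a-s_i/2$ are the algebra the paper leaves implicit), the lower bound by the column-restriction argument of Sec.~\ref{sec:algebraic} with your ``tile-closure'' playing exactly the role of the paper's ``for every column involved \ldots\ we may need to insert two columns,'' and the upper bounds by the same explicit vectors $(\mathbf{e}\otimes\mathbf{c},0)$ and $(0,\mathbf{c}\otimes\mathbf{e})$. One minor imprecision: in the $G_X\mathbf{u}=0$ branch you only restrict \emph{columns} of $\mathcal{H}_i$, so it is $\mathcal{C}_{\mathcal{H}_i'}$ that becomes trivial (giving $k_i'=0$ and hence $k'=0$ by your own formula), not $\mathcal{C}_{(\mathcal{H}_i')^T}$; the $\widetilde{d}_i$ enter only in the symmetric $G_Z\mathbf{v}=0$ branch. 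This does not affect the argument.
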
 
\begin{proof} The number of encoded qubits $k$ follows from
  Proposition \ref{th:lemma}.  The lower bound on the distance can be
  established as for the original hypergraph-product codes in
  Sec.~\ref{sec:algebraic}, except now the reduced binary check
  matrices $\mat{H}_1'$, $\mat{H}_2'$ should preserve the tiled form
  (\ref{eq:check-tiled}).  Hence, for every column involved in
  the product $G_X\cdot \mathbf{u}$, we may need to insert two columns
  into the reduced matrices; thus we need $\wgt(\mathbf{u})<d/2$ which
  reduces the lower bound on the distance.  The two upper bounds can
  be established by considering vectors $(\mathbf{e}\otimes
  \mathbf{c},0)$ with $\mathbf{c}\in\mathcal{H}_{1}$ and
  $(0,\mathbf{c}\otimes \mathbf{e})$ with
  $\mathbf{c}\in\mathcal{H}_{2}$, exactly as for the
  hypergraph-product codes in Sec.~\ref{sec:algebraic}.
\end{proof}
\begin{theorem}
\label{th:tiled1}
Suppose $a_{i}$ and $b_{i}$, $i=1,2$ in Eq.~(\ref{eq:check-tiled}) are
such that $k_{i}^{a}=0$, $k_{i}^{s}\neq 0$, $r_{i}=n_{i}$ and binary
codes with generator matrices $a_{i}+b_{i}$ and $a_{i}^T+b_{i}^T$ are
not distance $1$ codes. Then the quantum code in Eq.~(\ref{eq:Tor1})
has parameters
$[[n_{1}n_{2},2k_{1}k_{2},\min(d_{1},d_{2},\widetilde{d}_{1},\widetilde{d}_{2})]]$,
cf.\ Eq.~(\ref{eq:par-squar}).
\end{theorem}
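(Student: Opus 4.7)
The plan is to specialize Theorem~\ref{th:tiled} under the given hypotheses and then tighten its generic distance lower bound by a factor of two. Since $r_i=n_i$ forces $s_i=0$ and $\widetilde{k}_i=k_i$, the block-length formula of Theorem~\ref{th:tiled} collapses to $n=n_1 n_2$ and the dimension formula to $k=2k_1^s k_2^s+2k_1^a k_2^a$; the assumption $k_i^a=0$ (hence $k_i=k_i^s$) then yields $k=2k_1 k_2$.

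For the upper bound on $d$, Theorem~\ref{th:tiled} directly gives $d\le d_1$ and $d\le d_2$, because all of $k_i,\widetilde{k}_i$ are strictly positive. To obtain also $d\le\widetilde{d}_1$ and $d\le\widetilde{d}_2$, I would repeat the construction with vectors $\mathbf{v}$ satisfying $G_Z\mathbf{v}=0$: choosing $\mathbf{v}$ of the form $(\mathbf{e}\otimes\mathbf{c},0)$ with $\mathbf{c}$ a minimum-weight codeword of $\mat{C}_{\mat{H}_1^T}$ produces a low-weight element of the kernel of $G_Z$ that is not a row combination of $G_X$, hence a logical operator of weight at most $\widetilde{d}_1$, and symmetrically for $\widetilde{d}_2$. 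Combined, $d\le\min(d_1,d_2,\widetilde{d}_1,\widetilde{d}_2)$.

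The main work is upgrading the generic lower bound $d/2$ of Theorem~\ref{th:tiled} to the full $\min(d_i,\widetilde{d}_i)$. In that proof, each column of $\mat{H}_i$ hit by a candidate $\mathbf{u}$ with $G_X\mathbf{u}=0$ forces its tile-companion to be retained in the reduced matrix so as to preserve the tiled form of Eq.~(\ref{eq:check-tiled}), which is the origin of the factor two. Under the present hypotheses I expect the companion column to be dispensable: any reduced kernel vector supported on only one column of some tile pair would, after the symmetric/asymmetric split used in Proposition~\ref{th:lemma} (cases $\alpha_1\ne\alpha_2$ versus $\upsilon=\alpha_1'\otimes{1\choose 1}\otimes\alpha_3'$), project to either a nonzero word of the asymmetric subspace of $\mat{C}_{\mat{H}_i}$ (empty by $k_i^a=0$) or a weight-one word of the code generated by $a_i+b_i$ or $a_i^T+b_i^T$ (forbidden by the no-distance-one hypothesis). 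Eliminating both failure modes lets the Sec.~\ref{sec:algebraic} reduction-to-$G_Z$-span argument run already at $\wgt(\mathbf{u})<\min(d_i,\widetilde{d}_i)$, and the dual argument handles $\mathbf{v}$ with $G_Z\mathbf{v}=0$.

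The hard part is formalizing the "companion column is dispensable" claim. Concretely, one must decompose a truncated reduced kernel vector through the tensor dichotomy of Proposition~\ref{th:lemma}, identify which of the four classical codes each piece belongs to, and verify that the two extra hypotheses of Theorem~\ref{th:tiled1} exactly cover every failure case arising from a half-retained tile. Once that bookkeeping is in place, the remainder of Theorem~\ref{th:tiled}'s argument ports over verbatim.
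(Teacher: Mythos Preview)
Your overall strategy matches the paper's: specialize Theorem~\ref{th:tiled} under $r_i=n_i$ and $k_i^a=0$ to get $n$ and $k$, invoke the explicit logical-operator constructions for the upper bounds, and then argue that the extra hypotheses kill the factor of two in the generic lower bound. The paper's proof is a single sentence beyond ``similar to Theorem~\ref{th:tiled}'', so there is very little to compare beyond this skeleton.

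There is, however, a genuine difference in how the factor of two is removed. You propose to \emph{drop} the companion columns and argue that a reduced kernel vector cannot be supported on only one column of a tile pair; this forces you to redo the rank/kernel analysis (Proposition~\ref{th:lemma}) for matrices that are no longer in the tiled form~(\ref{eq:check-tiled}), which is the ``hard bookkeeping'' you flag in your last paragraph. The paper instead \emph{keeps} the companion columns---so the reduced matrices stay tiled and Proposition~\ref{th:lemma} applies verbatim---and uses the hypotheses to argue that the column count remains below $d$ \emph{after} the symmetric counterparts are added. In other words, the conditions $k_i^a=0$ and ``$a_i+b_i$ not distance~1'' are invoked to show that the support of a low-weight $\mathbf{u}$ is already symmetric with respect to the tile pairing, so symmetrization adds nothing. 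This is cheaper than your route: no new version of Proposition~\ref{th:lemma} is needed, and the reduction to the row span of $G_Z$ runs exactly as in Theorem~\ref{th:tiled}.

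One small correction: your proposed vector $(\mathbf{e}\otimes\mathbf{c},0)$ for the $\widetilde d_1$ upper bound does not lie in $\ker G_Z$ as written; the correct analogue places $\mathbf{c}\in\mat{C}_{\mat{H}_1^T}$ in the second block (or equivalently swaps the roles of the two blocks), mirroring the $G_X$ construction with the transpose.
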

The proof is similar to the proof of Theorem \ref{th:tiled}. The
additional restrictions on the binary codes guarantee that a vector
$\mathbf{u}$ of weight less than $d$ can only overlap with columns of
$\mat{H}_i$ in less than $d$ positions even after the symmetric
counterparts are added.

If we start from distance-$d$ LDPC codes with half size square parity matrices $\mat{H}_i^{(1/2)}$ [e.g., from Eq. (\ref{eq:symmetrization})] then $a_i=\mat{H}_i^{(1/2)}+E^{(1/2)}$ and $b_i=E^{(1/2)}$ in Eq. (\ref{eq:check-tiled}) lead to distance-$2d$ code satisfying Theorem \ref{th:tiled1}.
Alternatively, one can start with two cyclic LDPC codes with even blocksize $n_{i}$,
$i=1,2$, and the check polynomials $h_{i}(x)$ that divide
$x^{n_{i}/2}-1$. The corresponding square circulant parity-check
matrices $\mathcal{H}_{1}$ and $\mathcal{H}_{2}$ (and
$\mathcal{H}_{2}^p$) satisfy~(\ref{eq:check-tiled}). The generator
polynomials,
\begin{equation}
g_{i}(x)=(x^{n_{i}}-1)/h_{i}(x)=(x^{n_{i}/2}+1)\,(x^{n_{i}/2}-1)/h_{i}(x),\label{eq:generator-tiled}
\end{equation}
 and their reversed indicate that $k_i^a=0$. 
\begin{example}If $\mathcal{H}_{1}$
is the square parity matrix of a cyclic $[n_1,k_1,d_1]$ code corresponding to the polynomial $h(x)$ that divides $1-x^{n_1/2}$
and $\mathcal{H}_{2}=\mathcal{H}_{1}$ then the quantum code has parameters $[[n_1^{2},2k_1^{2},d_1]]$.  For $n_1=30$ and $h(x)=1+x+x^3+x^5$ we obtain $[[900,50,14,w=8]]$ code. For $h(x)=1+x$, we recover
the bipartite checkerboard codes from Sec.~\ref{sec:checkerboard}.\end{example}

\section{Conclusions}
We suggested several simple techniques to improve existing quantum LDPC codes,
toric codes, and generalized toric codes with asymptotically finite rate
(quantum hypergraph-product codes\cite{Tillich2009}).  In the latter case we
increased the rate of the code family originally proposed in
Ref.~\cite{Tillich2009} by up to four times.

\section*{Acknowledgment}
We are grateful to I. Dumer and M. Grassl for multiple helpful discussions.
This work was supported in
part by the U.S. Army Research Office under Grant No.\ W911NF-11-1-0027, and by the
NSF under Grant No. 1018935.





%


\end{document}